\newtheorem{theorem}{Theorem}
\newtheorem{lemma}{Lemma}
\newtheorem{corollary}{Corollary}
\newenvironment{proof}{{\bf Proof: }}{\hspace*{\fill}$\Box$\medskip}
\begin{document}

\title{Transforming planar graph drawings while maintaining height
}
\author[uw]{Therese Biedl}
\address[uw]{David R.~Cheriton School of Computer Science, 
University of Waterloo, 
Waterloo, ON N2L 3G1, Canada, {\tt biedl@uwaterloo.ca}}
\date{\today}

\begin{abstract}
There are numerous styles of planar graph drawings,  notably
straight-line drawings, poly-line drawings, orthogonal graph
drawings and visibility representations.  In this note,
we show that many of these drawings can be transformed from
one style to another without changing the height of the drawing.
We then give some applications of these transformations. 
\end{abstract}

\begin{keyword} 
Graph drawing; straight-line drawing; 
orthogonal drawing; visibility representation; upward drawings.
\end{keyword}

\maketitle
\section{Introduction}
\label{se:intro}

Let $G=(V,E)$ be a simple graph with $n=|V|$ vertices and
$m=|E|$ edges.  We assume that $G$ is {\em planar},
i.e., it can be drawn without crossing.    
In planar graph drawing, one aims to create a crossing-free picture of $G$.
It was known for a long time that such drawings
exist even with straight lines \cite{Wagner36,Fary48,Stein51}.
and even in an $O(n)\times O(n)$-grid \cite{FPP90,Sch90}.
Many improvements have been developed since, see for example
\cite{DBETT98,NR04}.

Formally, 
a {\em drawing} of a graph consists of assigning a point or an axis-aligned
box to every vertex, and a curve between the points/boxes of  $u$ and
$v$ to every edge $(u,v)$.   The drawing is called {\em planar} if
no two elements of the drawing intersect unless the corresponding elements
of the graph do.  Thus no 
vertex points/boxes coincide, no edge curve
self-intersects, no edge curves intersect each other (except at common
endpoints), and no edge curve intersects a vertex point/box other than 
its endpoints.  In this paper all drawings
are required to be planar.  

In a {\em straight-line drawing} vertices are
represented by points and edges are drawn as straight-line segments.
In a {\em poly-line drawing}, vertices are points and each edge curve is a
continguous sequence of line segments.  The place where an edge curves
changes direction is called a {\em bend}.
An {\em orthogonal drawing} uses a box for every vertex, and
requires that edges are drawn as poly-lines for which every line
segment is either horizontal or vertical.%
\footnote{If the maximum degree is 4, then one could additionally
demand vertices to be points.  We will not study this model in the
current paper, so an orthogonal drawing always allows boxes.}
A {\em visibility
representations} is an orthogonal drawing without bends.

In this paper we sometimes restrict these drawings further.
An orthogonal drawing is called
{\em flat} if all boxes of vertices are degenerated into horizontal
segments.
We say that a drawing $\Gamma$ is {\em $y$-monotone} if for any edge
$(v,w)$, the drawing of the edge in $\Gamma$ forms a $y$-monotone
path.  (Horizontal edge segments are allowed.)
Any straight-line drawing
and any visibility representation is automatically $y$-monotone.
See also Fig.~\ref{fi:draw_ex}.  (In our drawings, we thicken vertex boxes
slightly, so that horizontal segments appear as boxes of small height.)

In all drawings, the defining elements (i.e.,
points of vertices, corners of boxes of vertices, bends, and attachment points
of edges to vertex-boxes) must be placed
at points with integer coordinates.  A drawing is said to have
{\em width $w$} and {\em height $h$} if (possibly after translation)
all such points are placed on the $[1,w]\times [1,h]$-grid.  
The height is thus measured by the number of {\em rows}, 
i.e., horizontal lines with integer $y$-coordinates that are occupied by the 
drawing.  After rotation, we may assume that the height is no larger
than the width.

In a previous paper \cite{Bie-DCG11}, we studied transformations
between these graph drawing styles that preserved the asymptotic
area.  In particular, we showed that if $G$ has a visibility
representation, then it has a poly-line drawing of asymptotically
the same area.  In this paper, we study such
transformations with the goal of keeping
the height of the drawing unchanged.
Our results are illustrated in Figure~\ref{fig:transform_graph}.
All our transformations have two additional properties that 
are useful in the proofs and 
some of the applications.  First, the resulting drawing
{\em has the same
 $y$-coordinates}, i.e., any vertex (and also any bend that is
not removed) has the same $y$-coordinate in
the new drawing as in the original one.  (Since we give transformations
only for flat orthogonal drawings, this concept makes sense even when
transforming vertex-boxes into points.) Second, the resulting drawings
have {\em the same left-to-right order in each row}, i.e., if 
vertices $v$ and $w$
had the same $y$-coordinate, with $v$ left of $w$, then the same
also holds in the resulting drawing.

\begin{figure}[ht]
\hspace*{\fill}
\input{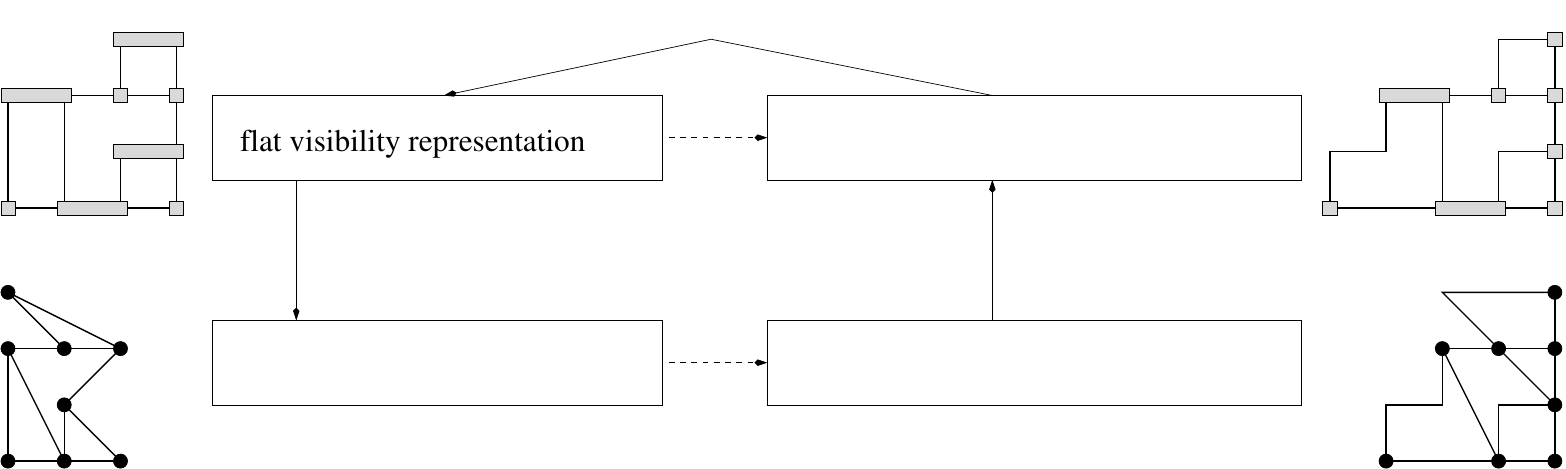_t}
\hspace*{\fill}
\caption{Height-preserving transformations proved in
this paper. Dashed arrows are trivial implications.  
}
\label{fig:transform_graph}
\label{fi:draw_ex}
\end{figure}

We then study some applications of these results.  Most importantly,
they allow to derive some height-bounds for drawing styles for
which we are not aware of any direct proof, and they allow to
formulate some NP-hard graph drawing problems as integer programs.

\section{Flat visbility representations to straight-line drawings}
\label{se:VR_SL}

\begin{theorem}
\label{thm:VR2SL}
\label{thm:VR_SL}
\label{th:VR2SL}
Any flat visibility representation $\Gamma$ can be transformed into a 
straight-line drawing $\Gamma'$ with the same $y$-coordinates
and the same left-to-right orders in each row.
\end{theorem}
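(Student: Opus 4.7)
The plan is to keep each vertex's $y$-coordinate fixed at its value in $\Gamma$ and to choose an $x$-coordinate $x'(v)$ for each vertex $v$ so that (i) within each row the vertices retain their left-to-right order and (ii) the resulting straight-line drawing is planar. Row order is easy to arrange: the vertex segments in any row of $\Gamma$ are pairwise disjoint intervals, so an assignment like $x'(v) := x_L(v)$ (the left endpoint of $v$'s segment in $\Gamma$) already orders them correctly within each row.

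Planarity is the real difficulty. Each edge $(u,v)$ of $\Gamma$ becomes the straight segment from $(x'(u), y(u))$ to $(x'(v), y(v))$ in $\Gamma'$; together these segments must miss every other vertex point and have no pairwise crossings. The $x$-coordinate at which this segment meets a row $i$ strictly between $y(u)$ and $y(v)$ is the convex combination $\frac{i - y(v)}{y(u) - y(v)} x'(u) + \frac{y(u) - i}{y(u) - y(v)} x'(v)$, a linear function of $x'(u)$ and $x'(v)$. So I would set up the linear system that, for every row $i$ and every pair of ``objects'' of $\Gamma$ in row $i$ (a vertex in that row, or an edge crossing it), forces the same left-to-right order to hold in $\Gamma'$. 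All of these are linear inequalities in the variables $x'(v)$, and a feasible solution, scaled and lightly perturbed, yields integer coordinates.

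The main step, and the main obstacle, is proving this linear system feasible. Row-order constraints within a single row are trivially compatible since they come from a disjoint interval order, but the constraints in different rows are coupled through shared edge endpoints, so one must rule out any cyclic chain of inequalities. This is exactly where the planarity of $\Gamma$ enters: any would-be cyclic incompatibility among the $x'(v)$'s would correspond to a topological obstruction in $\Gamma$, which a planar drawing cannot contain. I would make this precise either by a top-down sweep (process rows in decreasing order of $y$, assigning $x'$-values consistent with the already-fixed positions above and with the $\Gamma$-order in the current row), or by an LP-duality argument exhibiting any hypothetical infeasibility certificate as an oriented cycle forbidden by planarity.
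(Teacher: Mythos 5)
Your setup (fix $Y(v)=y(v)$ and choose $X(v)$ subject to order-preserving linear inequalities) is a reasonable reformulation, but the proof has a genuine gap exactly where you flag it: you never establish that the system is feasible, and neither of the two routes you sketch closes it. The top-down sweep does not work as described: when you process a row $r$, an edge $(u,v)$ with $y(u)>r>y(v)$ crosses $r$ at a convex combination of $X(u)$ and $X(v)$, and $X(v)$ has not yet been assigned, so the order constraints of row $r$ can be neither checked nor enforced at the time you visit row $r$. The LP-duality route is only a plan --- you would have to characterize the Farkas certificates of a system whose coefficients are convex-combination weights (not $\pm 1$) and show that each corresponds to a configuration forbidden by planarity; none of that is carried out. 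A secondary issue is that you impose more than the theorem asks: you require the order of \emph{edge crossings} in every intermediate row to be preserved, not just the order of vertices, so you would additionally need to argue that this strictly stronger system is feasible.

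The paper sidesteps the global feasibility question with a \emph{left-to-right} greedy sweep: sort the vertices by the leftmost $x$-coordinate $x_l(\cdot)$ of their boxes in $\Gamma$, and place each $v_i$ at the smallest $x$-coordinate compatible with the partial drawing, namely to the right of everything already in its row and able to see each already-placed neighbour. The key structural fact that makes the greedy step always succeed is this: if $v_g$ is an already-placed neighbour of $v_i$ in a different row, then $v_g$ and $v_i$ see each other vertically in $\Gamma$, so $x_r(v_g)\ge x_l(v_i)$; hence every neighbour of $v_g$ lying to its right in $\Gamma$ is still unplaced, so $v_g$ is currently the rightmost object in its row of the partial drawing and can see the point $(X,y(v_i))$ for all sufficiently large $X$. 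Consequently the finitely many lower bounds on $X(v_i)$ always have a common solution and the induction closes. Some such use of the visibility structure of $\Gamma$ (not merely its planarity) is what your argument is missing; without it, the feasibility claim at the heart of your proof remains an assertion.
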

\begin{proof}
For any vertex $v$, use $x_l(v), x_r(v)$ and $y(v)$ to denote leftmost and
rightmost $x$-coordinate and (unique) $y$-coordinate of the box that 
represents $v$ in $\Gamma$.   
Use $X(v)$ and $Y(v)$ to denote
the (initially unknown) coordinates of $v$ in $\Gamma'$. 
For any vertex set $Y(v)=y(v)$, hence $y$-coordinates are the same.

Let $v_1,\dots,v_n$ be the vertices sorted by $x_l(.)$, breaking ties
arbitrarily. The algorithm determines $X(.)$ for each vertex by 
processing vertices in 
this order and expanding the drawing $\Gamma'_{i-1}$ created for 
$v_1,\dots,v_{i-1}$ into a drawing $\Gamma'_i$ of $v_1,\dots,v_i$,
which has the same left-to-right orders. 

Suppose $X(v_g)$ has been computed for all $g<i$ already.  To find
$X(v_i)$, determine lower bounds for it by considering all predecessors
of $v_i$ and taking the maximum over all of them.  
(For each vertex $v_i$, the {\em predecessors}
of $v_i$ are the neighbours of $v_i$ that come earlier in the order
$v_1,\dots,v_n$.) A first (trivial)
lower bound for $X(v_i)$ is that it needs to be to the right of
anything in row $y(v_i)$.  Thus, if $\Gamma'_{i-1}$
contains a vertex
or part of an edge at point $(X,y(v_i))$, then $X(v_i)\geq 
\lfloor X \rfloor + 1$ is required.

Next consider any predecessor $v_g$ of $v_i$ with $y(v_g)\neq y(v_i)$.  
Since $v_g$ and $v_i$ are not in the same row, they must see each other
vertically in $\Gamma$, which means that $x_r(v_g)\geq x_l(v_i)$.
See also Fig.~\ref{fig:transform}.
So if $v_g$ has a neighbour $v_k$ to its right in $\Gamma$, then
$x_\ell(v_k)> x_r(v_g) \geq x_\ell(v_i)$, which implies that 
$v_k$ has not yet been added to $\Gamma'_{i-1}$.  Since $\Gamma_{i-1}'$
has the same left-to-right orders, $v_g$ is hence the rightmost
vertex in its row in $\Gamma'_{i-1}$ and can see towards infinity on the
right.  But then $v_g$ can also see the point $(+\infty,y(v_i))$, or
in other words, there exists some $X_g$ such that $v_g$ can see all
points $(X,y(v_i))$ for $X\geq X_g$.  
Impose the lower bound $X(v_i) \geq \lfloor X_g \rfloor + 1$ on the
$x$-coordinate of $v_i$.

\begin{figure}[ht]
\hspace*{\fill}
\input{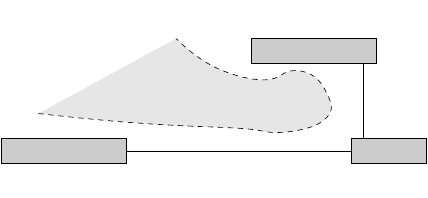_t}
\hspace*{\fill}
\input{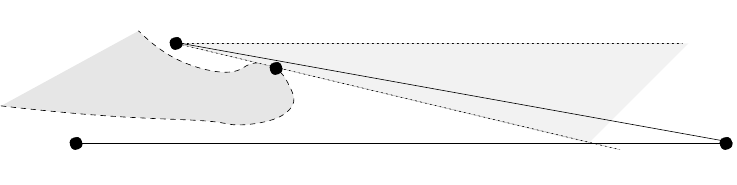_t}
\hspace*{\fill}
\caption{Transforming a flat visibility drawing into a straight-line
drawing with the same $y$-coordinates.}
\label{fig:transform}
\end{figure}

Now let $X(v_i)$ be the smallest value that satisfies the above lower
bounds (from the row $y(v_i)$ and from all predecessors of $v_i$ in
different rows.)  Set $X(v_i)=0$ if there were no such lower bounds.%
\footnote{To simplify the calculations below it helps to use 0 (as
opposed to 1) for the leftmost $x$-coordinate.}
Directly by construction, placing $v_i$ at
$(X(v_i),y(v_i))$ allows it to be connected with straight-line segments
to all its predecessors.  This includes the predecessor (if any)
that is in the same row as $v_i$, since there can be at most one in a flat
visibility representation, and it is in the same row as $v_i$.
This gives a drawing $\Gamma'_i$ of $v_1,\dots,v_i$ as
desired, and the result follows by induction.
\end{proof}

\subsection{Width considerations}

While our transformation keeps the height intact,
the width can increase dramatically.  
Fig.~\ref{fig:bad_example} shows a flat visibility representations of 
height $h$ and width $O(n)$ 
such that the transformation of Theorem~\ref{th:VR2SL} 
has width $\Omega((h-2)^{(n-3)})$.
Specifically, using induction one shows that vertex $i$ (for
$i\geq 3$) is placed
with $x$-coordinate $1+(h-2)+\dots+(h-2)^{i-3}$  and leaves an 
edge with slope $\pm 1 / (1+(h-2)+\dots+(h-2)^{i-3})$.
But this is (asymptotically) the
worst that can happen.  

\begin{figure}[ht]
\hspace*{\fill}
\input{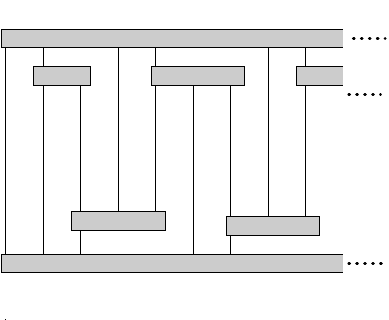_t}
\hspace*{\fill}
\input{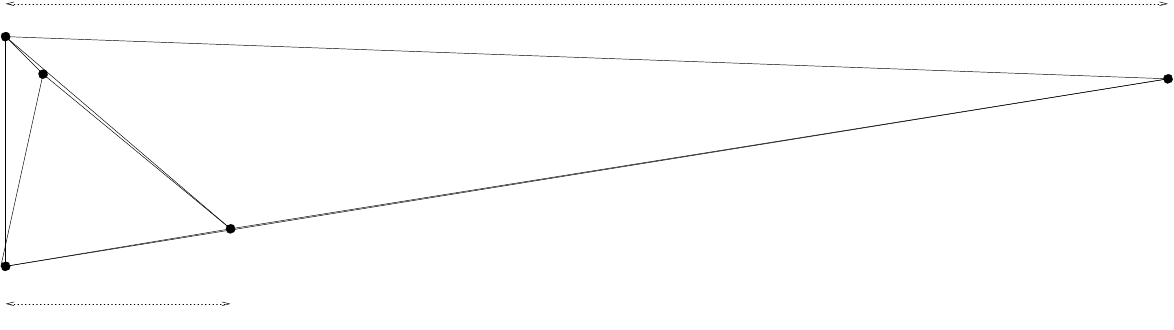_t}
\hspace*{\fill}
\caption{A flat visibility representation for which the corresponding
straight-line drawing has exponential width.  Vertices are numbered
in the order in which they are processed.  }
\label{fig:bad_example}
\end{figure}

\begin{lemma}
For $h\geq 4$, the width of the drawing obtained with 
Theorem~\ref{thm:VR_SL} is $O((h-2)^{n-3})$.
\end{lemma}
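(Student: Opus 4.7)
The plan is to prove the width bound by induction on $i$, showing that the maximum $x$-coordinate used in $\Gamma'_i$, namely $W_i := \max_{g\le i} X(v_g)$, satisfies $W_i = O((h-2)^{i-3})$ for $i\ge 3$. Because every edge of $\Gamma'$ is a straight segment between vertex points, the width of $\Gamma'_n$ equals $W_n$. The base case $W_3 = O(1)$ follows since $X(v_1), X(v_2), X(v_3)$ are fixed by only a constant number of lower bounds, so the whole argument reduces to establishing a single-step recurrence $W_i \le (h-2)\,W_{i-1}+O(1)$.

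For the inductive step I would bound each of the lower bounds that can determine $X(v_i)$. The same-row lower bound contributes at most $W_{i-1}+1$. For a predecessor $v_g$ with $y(v_g)\ne y(v_i)$, the quantity $X_g$ is the $x$-coordinate at which the line from $v_g$ tangent to the binding obstruction meets row $y(v_i)$. Since edges in $\Gamma'_{i-1}$ are straight segments, any such tangent touches at an endpoint of its edge, so I may always take the binding obstruction to be a previously placed vertex point $p=(X_p,Y_p)$. By similar triangles,
\[
X_g \;=\; X(v_g) + \frac{y(v_i)-y(v_g)}{Y_p - y(v_g)}\bigl(X_p - X(v_g)\bigr),
\]
with $0\le X(v_g) < X_p\le W_{i-1}$. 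Moreover, an obstruction in row $y(v_i)$ is already handled by the same-row bound, so $Y_p$ may be assumed strictly between $y(v_g)$ and $y(v_i)$, giving $|Y_p-y(v_g)|\ge 1$ and $|y(v_i)-Y_p|\ge 1$ simultaneously.

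The main obstacle is extracting exactly the factor $h-2$ rather than the na\"ive $h-1$: a coarse bound using only $|Y_p-y(v_g)|\ge 1$ and $|y(v_i)-y(v_g)|\le h-1$ gives ratio at most $h-1$ and hence the weaker recurrence $W_i\le (h-1)W_{i-1}+O(1)$. To sharpen this, I would use both row inequalities together with the constraint $Y_p,y(v_g),y(v_i)\in\{1,\dots,h\}$, observing that the ratio $(y(v_i)-y(v_g))/(Y_p-y(v_g))$ can attain $h-1$ only in the boundary case $d_1=1$, $d_2=h-1$, which forces $v_g$ into an extreme row with the obstruction in the adjacent row; in this configuration the gap $X_p-X(v_g)$ is necessarily much smaller than $W_{i-1}$ because $v_g$ cannot simultaneously be leftmost. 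Once the single-step bound $X(v_i)\le (h-2)W_{i-1}+O(1)$ is in place, solving the resulting linear recurrence by induction yields $W_n = O((h-2)^{n-3})$, matching the lower bound exhibited in Figure~\ref{fig:bad_example}.
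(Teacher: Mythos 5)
Your setup (induction on $i$, bounding $X(v_i)$ via the intersection formula, and the observation that the na\"ive estimate only yields a factor of $h-1$) matches the paper up to the point where the real difficulty starts, but the step you propose to close the gap does not work, and the single-quantity induction hypothesis $W_i=\max_{g\le i}X(v_g)$ is too weak to carry the argument.

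The problematic claim is that in the boundary case ($v_g$ in an extreme row, $v_i$ in the opposite extreme row, obstruction $p$ adjacent to $v_g$) ``the gap $X_p-X(v_g)$ is necessarily much smaller than $W_{i-1}$ because $v_g$ cannot simultaneously be leftmost.'' No such bound holds, and none is derived: $v_g$ can sit at $x$-coordinate $0$ in the top row while $p$ sits far to the right in row $h-1$, exactly as in the worst-case family of Figure~\ref{fig:bad_example}, so $X_p-X(v_g)$ can be of the same order as the largest coordinate placed so far. With only the hypothesis $X_p\le W_{i-1}$ you are stuck with $X(v_i)\le (h-1)W_{i-1}+O(1)$, which compounds to $O((h-1)^{n})$.

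The paper's fix is to \emph{strengthen the induction hypothesis}: it maintains two bounds, $W(i)$ for all points of $\Gamma'_i$ and a strictly tighter $W'(i)$ for points not in the first or last row, with $W'(i)=1+(h-2)W'(i-1)$ and $W(i)=1+(h-1)W'(i-1)$. The key structural observation is that the obstruction $p$ realizing the dangerous ratio always lies in an \emph{interior} row --- if $p$ is in the bottommost row then $X_g\le X(p)$ and the step is harmless, and $p$ cannot be in the topmost row above $v_g$ --- so the factor $h-1$ only ever multiplies $W'(i-1)$, never $W(i-1)$, and is therefore applied at most ``once'' rather than compounded. Moreover, when the vertical span $y(v_g)-y(v_i)$ equals $h-1$, the vertex $v_i$ being placed is itself in an extreme row, so it only needs to satisfy the weaker bound $W(i)$. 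This two-tier bookkeeping is the missing idea; without it (or some equivalent strengthening), your recurrence $W_i\le(h-2)W_{i-1}+O(1)$ is not provable by a one-step induction even though the inequality itself is consistent with the paper's closed forms. A secondary, minor point: the base case must be taken at $i=4$ (or handle $v_1,v_2,v_3$ all lying in one row separately), since three collinear initial vertices already occupy $x$-coordinates $0,1,2$.
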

\begin{proof}
Define two recursive functions as $W(2)=W'(2)=0$, 
$W'(i)=1+(h-2)W'(i-1)$ and $W(i)=1+(h-1)W'(i-1)$ for $i\geq 3$. 
%
%
We will show that for $i\geq 4$, 
any point $p$ of the drawing $\Gamma_i'$
has $x$-coordinate at most $W(i)$, and if $p$
is not on the first or last row, then it has
$x$-coordinate at most $W'(i)$.
Observe that $W'(i)=1+(h-2)+\dots+(h-2)^{i-3}\in O((h-2)^{i-3})$
and $W(i)=W'(i-1)+W'(i)\leq 2W'(i)$; hence $W(n)\in O((h-2)^{n-3})$ as desired.

For the base case, we have two cases.  If two of $\{v_1,v_2,v_3\}$
are in different rows, then two of $\{v_1,v_2,v_3\}$ have
$x$-coordinate 0 and the third has $x$-coordinate at most 1, hence
the claim holds for $\Gamma_3'$ since $W(3)=W'(3)=1$.  If $v_1,v_2,v_3$
are all in the same row, then they have $x$-coordinates 0,1,2.  Vertex
$v_4$ is either also placed in this row (and then has $x$-coordinate 3)
or it is in a different row (and then has $x$-coordinate 0.)  Either way,
all points in $\Gamma'_4$ then have $x$-coordinate at most
$3\leq W'(4)\leq W(4)$.  This shows the base case.

For the induction step, we distinguish cases on how the $x$-coordinate $X(v_i)$
of $v_i$ was determined:
\begin{itemize}
\item Assume first that $X(v_i)$ was determined via $X(v_i) = \lfloor X
\rfloor +1$, where $X$ is the $x$-coordinate of some point $p$ in 
$\Gamma'_{i-1}$ in the row of $v_i$.  By induction we know that
$X\leq W(i-1)$, and so $X(v_i)\leq 1+X \leq 1+W(i-1) \leq 1+2W'(i-1)\leq W(i)$
by $h\geq 3$.
If $v_i$ is not in the first or last row, then neither is $p$, so
$X\leq W'(i-1)$ and $X(v_i)\leq 1+W'(i-1) \leq W'(i)$.

\item Assume now that $X(v_i)=\lfloor X_g \rfloor +1$, where $X_g$
is the $x$-coordinate of the intersection of the row of $v_i$ with
a line $\ell$ through some predecessor $v_g$
of $v_i$ and some point $p$ of drawing $\Gamma'_{i-1}$.
Assume as in Figure~\ref{fig:transform} that
$y(v_g)\geq y(p)$; the other case is
similar.  
This implies $y(v_g)\geq y(v_i)$ as well, otherwise $p$ would
not have been an obstruction for the edge $(v_g,v_i)$.  

If $X(p)\leq X(v_g)$, then $X_g\leq X(v_g)$, therefore
$X(v_i)\leq X(v_g)+1$ satisfies the bound as in the first case.
If $p$ is in the bottommost row, then $X_g\leq X(p)$, therefore
$X(v_i)\leq X(p)+1$ satisfies the bound as in the first case.
Finally assume $X(p)> X(v_g)$ and $p$ is not in the bottommost
row, hence $X(p)\leq W'(i-1)$.
Now
$$X_g = X(v_g) + (y(v_g)-y(v_i)) \frac{X(p)-X(v_g)}{y(v_g)-y(p)}
\leq (y(v_g)-y(v_i)) X(p)
\leq (y(v_g)-y(v_i)) W'(i-1).$$
If $y(v_g)-v(v_i)\leq h-2$ then $X(v_i)\leq 1+X_g \leq 
1+(h-2)W'(i-1)=W'(i)$.  Otherwise $v_i$ is in the
bottommost row (and $v_g$ in the top row), and 
$X(v_i)\leq 1+X_g \leq 1+(h-1)W'(i-1)=W(i)$ as desired.
\end{itemize}
\end{proof}

We note here that $h\geq 4$ is needed only for $1+(h-2)+\dots+(h-2)^{n-3}
\in O((h-2)^{n-3})$; much the same proof shows that the height is $O(n)$
for $h=3$.

\section{$y$-monotone flat orthogonal drawings to flat visibility 
representations}

\begin{theorem}
\label{thm:o2VR}
\label{th:o2VR}
\label{thm:OD_VR}
Any flat $y$-monotone orthogonal drawing $\Gamma$ can be transformed into a 
flat visibility representation $\Gamma'$ with the same $y$-coordinates
and the same left-to-right orders in each row.
\end{theorem}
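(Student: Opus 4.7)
My plan is to mirror the incremental construction used in Theorem~\ref{thm:VR_SL}: process the vertices in the order $v_1,\dots,v_n$ sorted by $x_l(\cdot)$ in $\Gamma$, and build $\Gamma'$ by inserting each $v_i$ into a fresh rightward region of the partial drawing $\Gamma'_{i-1}$, together with the vertical edges to $v_i$'s already-placed neighbors. The invariant is that $\Gamma'_{i-1}$ is a flat visibility representation of $\{v_1,\dots,v_{i-1}\}$ with the same $y$-coordinates and the same left-to-right orders in each row as $\Gamma$.

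The step for $v_i$ is as follows. For each cross-row predecessor $v_g$, I allocate one fresh new column strictly to the right of all content of $\Gamma'_{i-1}$, with the left-to-right order of these new columns chosen to match the order in which the corresponding edges attach to $v_i$'s box in $\Gamma$ (reading top- and bottom-attachments in the cyclic order around the flat box). This preserves the rotation system at $v_i$ and keeps the new vertical edges from crossing. I then extend each $v_g$'s box rightward to reach its allocated column, draw $(v_g,v_i)$ as the vertical segment there, and place $v_i$'s box to span all of the allocated columns. If $v_i$ has a same-row predecessor (at most one, by flatness), I extend $v_i$'s box leftward to meet it and draw the in-row edge as a horizontal segment.

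The main obstacle I anticipate is that, unlike the visibility-representation input of Theorem~\ref{thm:VR_SL}, edges of $\Gamma$ may contain horizontal bends, so the inequality $x_r(v_g)\ge x_l(v_i)$ in $\Gamma$ is no longer automatic: $v_g$ need not be the rightmost vertex in its row in $\Gamma'_{i-1}$, and extending $v_g$'s box rightward may collide with either a later-placed vertex in $v_g$'s row or a vertical edge of $\Gamma'_{i-1}$ that passes through row $y(v_g)$. I would handle both issues by first inserting blank columns to push any obstructing content further right before extending $v_g$. A blank-column insertion preserves the structure of a flat visibility representation and changes neither the $y$-coordinates nor the left-to-right order in any row, so these local fixes compose safely across all cross-row predecessors of $v_i$. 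The careful bookkeeping---using the $y$-monotonicity of $\Gamma$ to decide which predecessors actually require extension and in what order the insertions are made---is where the bulk of the work lives.

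Once these details are handled, the induction closes: $\Gamma'_i$ remains a flat visibility representation because $v_i$ and its vertical edges live in fresh columns while the blank-column insertions preserve validity; the left-to-right row orders and $y$-coordinates are unchanged by construction. The base case $i=1$ is immediate.
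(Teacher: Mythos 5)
Your approach is genuinely different from the paper's, and it has a gap that I do not think can be closed in the form you describe. The paper does not rebuild the drawing incrementally at all: it keeps $\Gamma$ essentially in place, first widens each vertex box to absorb the bends of the (at most two) horizontally attaching edges, and then observes that by $y$-monotonicity every remaining bent edge attaches top-to-bottom and hence forms a zig-zag, which is removed by the standard cut-and-shift operation (split the plane along the zig-zag extended to $\pm\infty$ and slide the two halves apart). Both steps are local, manifestly planarity-preserving, and add only width. Your plan instead re-derives the geometry from scratch in the style of Theorem~\ref{thm:VR_SL}, which forces you to re-prove planarity of the result---and that is exactly where it breaks.

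The concrete failure is in the step where you extend a cross-row predecessor $v_g$ rightward to reach a fresh column placed to the right of \emph{all} content of $\Gamma'_{i-1}$. You correctly identify that, unlike in Theorem~\ref{thm:VR_SL}, you cannot conclude $x_r(v_g)\ge x_l(v_i)$, so $v_g$ need not have a clear view to the right: a vertical edge drawn at an earlier step may pass through row $y(v_g)$ strictly to the right of $v_g$'s box. Your proposed remedy---inserting blank columns to ``push any obstructing content further right''---does not remove the obstruction, because the fresh column for $(v_g,v_i)$ is by definition even further right; the obstructing vertical segment stays between $v_g$ and its target column no matter how many blank columns you insert. A concrete instance: $a$ in row 1, $b$ in row 3, $g$ in row 2 with $g$ lying to the left of where the edge $(a,b)$ crosses row 2, and $i$ in row 4 adjacent to $g$; with processing order $a,b,g,i$ the edge $(a,b)$ becomes a vertical segment through row 2 to the right of $g$, and $g$ can never reach the far right. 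The edge $(g,i)$ must then be routed to the \emph{left} of that segment, which contradicts your ``fresh columns strictly to the right of everything'' scheme and also complicates where $v_i$'s box can sit. The same problem recurs when you extend $v_i$'s box leftward to meet a same-row predecessor. So the ``careful bookkeeping'' you defer is not bookkeeping but the actual heart of a correctness proof, and as stated the construction produces crossings. I would recommend abandoning the incremental reconstruction and arguing as the paper does, via bend absorption plus zig-zag removal.
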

\begin{proof}
First, expand every
vertex $v$ to the left and right until it covers all bends (if any)
of edges that attach horizontally at $v$.  Since $v$ has height 1,
there is at most one edge $e$ each on the left and right side of $v$,
and the expansion of $v$ covers only space previously used by $e$,
hence creates no overlap.

Now we arrive at a drawing where all edges that have bends attach
vertically at their endpoints.  
Let $e$ be an edge with bends (if there
is none we are done.)  Since $e$ is drawn $y$-monotone, it attaches
at the top of one endpoint and the bottom of the other endpoint,
and the only way it can have bends is to have a right turn followed
by a left turn or vice versa.  Thus, $e$ has a ``zig-zag''.  It is
well known that such a zig-zag can be removed by transforming the
drawing as follows (see also Figure~\ref{fig:remove_Z}):  
Extend the ends of the zig-zag upward and downward
to infinity, and then shift the two sides of the resulting separation
apart until the two rays of the zig-zag align.  See for example
\cite{BLPS09} for details.  This operation adds width, but no height.
Applying this to all edges that have bends gives A
visibility representation.   
\end{proof}

\begin{figure}[ht]
\hspace*{\fill}
\input{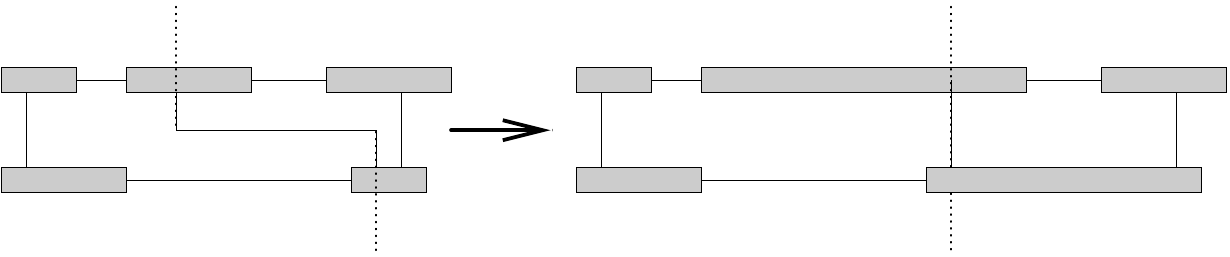_t}
\hspace*{\fill}
\caption{Removing a zig-zag by shifting parts of the drawing rightwards.}
\label{fig:remove_Z}
\end{figure}

\subsection{Width considerations}

Our construction may increase the width quite a bit, but mostly with
columns that are {\em redundant}: They contain neither a vertical
edge, nor are they the only column of a vertex.  A natural post-processing
step is to remove redundant columns.  We then obtain small width.
In fact, the width is small for {\em any} visibility representation.

\begin{lemma}
Any visibility representation of a connected graph has width at most $\max\{m,n\}$ after deleting redundant columns.
\end{lemma}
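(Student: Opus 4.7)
The plan is to exploit connectedness to show that after deleting redundant columns every remaining column contains a vertical edge, and then to bound the number of such edge columns by $m$. The trivial case $n=1$ is handled separately: the only vertex is necessarily non-redundant (its column is its unique column), so the width is $1=n\leq\max\{m,n\}$.

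For $n\geq 2$, I would first observe that each edge in the visibility representation is a single vertical segment, which therefore occupies exactly one column. Consequently at most $m$ columns contain a vertical edge. I then claim that every non-redundant column is in fact such an edge column. Indeed, by definition a non-redundant column $c$ either already contains a vertical edge (in which case we are done), or $c$ is the only column of some vertex $v$. In the latter case the box of $v$ lies entirely within $c$, so every edge incident to $v$ must attach to the top or bottom of that box and hence lie in column $c$. Since the graph is connected and $n\geq 2$, the vertex $v$ has at least one incident edge, which is therefore a vertical segment in $c$, confirming that $c$ is an edge column. Combining, after deleting redundant columns the width is at most the number of edge columns, which is at most $m\leq\max\{m,n\}$.

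I do not anticipate a serious obstacle; the argument is essentially a short counting step whose only subtle point is noticing that a vertex confined to a single column $c$ forces each of its incident edges to lie in $c$, which collapses the ``only column of a vertex'' case of non-redundancy into the ``contains a vertical edge'' case. The appearance of $n$ in $\max\{m,n\}$ is needed only for the degenerate case $n=1$, where $m=0$ would otherwise fail to give a valid bound.
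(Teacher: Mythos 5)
There is a genuine gap in your argument: the claim that a vertex $v$ whose box occupies a single column $c$ has all of its incident edges attaching at the top or bottom of the box, and hence vertical. In this paper a visibility representation is an orthogonal drawing without bends, so edges may be \emph{horizontal} segments attaching at the left or right side of a vertex box --- and a box of width one still has a left and a right side. Concretely, take a path $v_1 - v_2 - \dots - v_n$ drawn with all vertices in a single row, each occupying one column, joined by horizontal edges. Every column is the sole column of some vertex, hence non-redundant, yet no column contains a vertical edge; the width is $n = m+1 > m$. This shows that your conclusion ``for $n \ge 2$ the width is at most $m$'' is false, not merely unproven, and that the $n$ in $\max\{m,n\}$ is needed for more than the degenerate case $n=1$.

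The paper's proof avoids this by splitting the count differently: letting $m_v, m_h$ be the numbers of vertical and horizontal edges and $V_h$ the set of vertices with \emph{no} incident vertical edge, the non-redundant columns number at most $m_v + |V_h|$ (your collapsing argument is valid precisely for sole-column vertices that do have an incident vertical edge). If $m_v = 0$ this is at most $n$. If $m_v > 0$, connectedness is used via a spanning tree rooted at a vertex with an incident vertical edge: each $w \in V_h$ has a horizontal edge to its parent, so $|V_h| \le m_h$ and the width is at most $m_v + m_h = m$. If you want to salvage your write-up, you need some such mechanism to charge the sole-column vertices without vertical edges either to $n$ or injectively to horizontal edges.
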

\begin{proof}
Let $m_h$ and $m_v$ be the number of edges drawn horizontally and vertically.
Let $V_h$ be the vertices without incident vertical edge.
Then the width is at most $m_v+|V_h|$.   This shows the claim if $m_v=0$.
If $m_v>0$, then let $v$ be a vertex not in $V_h$, 
pick an arbitrary spanning tree $T$ and root it at $v$.
For any vertex $w\in V_h$, the edge from $w$ to its parent in $T$
must be horizontal by definition of $V_h$.  Hence there are
at least $|V_h|$ horizontal edges,
and the width is at most $m_v+m_h= m$.
\end{proof}

\section{Poly-line drawing to flat orthogonal drawing}
\label{se:SL_VR}

\begin{theorem}
\label{thm:PL_OD}
\label{th:PL2OD}
Any poly-line drawing $\Gamma$ can be transformed into a 
flat orthogonal drawing $\Gamma'$ with the same $y$-coordinates
and the same left-to-right orders in each row.
$\Gamma'$ is $y$-monotone if $\Gamma$ was.
\end{theorem}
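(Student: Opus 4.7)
The plan is to process the straight-line segments composing the edges of $\Gamma$ one at a time and replace each non-axis-aligned segment with a three-piece orthogonal path, while keeping every vertex on its original row. Each vertex point of $\Gamma$ becomes a horizontal vertex-box on the same row; the box has width $0$ for low-degree vertices and is widened along the row for vertices whose incident edges need more than four attachment directions. This immediately preserves all $y$-coordinates and, because only horizontal widening is used, the left-to-right order in each row.

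For a line segment going from $p_1=(x_1,y_1)$ to $p_2=(x_2,y_2)$, do nothing if it is already horizontal or vertical. Otherwise pick an auxiliary column $x^*$ strictly between $x_1$ and $x_2$ and replace the segment by the path $(x_1,y_1)\to (x^*,y_1)\to (x^*,y_2)\to (x_2,y_2)$. Every new bend lies at $y$-coordinate $y_1$ or $y_2$, both already present in $\Gamma$, so the set of $y$-coordinates is preserved. Moreover, when $\Gamma$ is $y$-monotone, the vertical piece of the new path runs in the same $y$-direction as the original segment and the horizontal pieces stay in-row, so the replacement is $y$-monotone and $\Gamma'$ inherits the property.

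Before any replacements I would uniformly scale all $x$-coordinates of $\Gamma$ by a sufficiently large factor so that each old column is followed by many empty columns; uniform scaling preserves the left-to-right order in every row and leaves room for the auxiliary columns $x^*$. I expect the main obstacle to be choosing these $x^*$-values for all non-axis-aligned segments simultaneously so that the resulting L-shapes are pairwise non-crossing and so that no row's horizontal order is disturbed. My plan is a sweep through the segments (for example, in the order their endpoints appear in the scaled drawing), assigning each $x^*$ to the next available empty column of the appropriate vertical strip. The combinatorial argument that this greedy choice succeeds — using the planarity of $\Gamma$ to show that the bounding rectangles of relevant segments are either disjoint or nested, and that consecutive segments of the same edge meet cleanly at their shared bend — is the heart of the proof. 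Once this is established, verifying that $\Gamma'$ is a flat orthogonal drawing with the claimed properties is routine.
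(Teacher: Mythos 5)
There is a genuine gap, and it sits exactly where you say the heart of the proof is. Your construction places every new bend at $y$-coordinate $y_1$ or $y_2$, i.e.\ all horizontal jogs of an orthogonalized segment live in the two rows containing its endpoints. This cannot work in general. Consider a vertex $u$ in row $0$ with three neighbours $w_1,w_2,w_3$ in row $5$ at increasing $x$-coordinates far to the right of $u$ (a ``fan''). Each of the three edges must realize a large horizontal displacement, and under your scheme that displacement can only occur in row $0$ or row $5$; whichever row you choose, the three horizontal pieces emanate from (or arrive at) nested $x$-intervals in the same row and necessarily overlap one another, no matter how you widen $u$'s box or choose the columns $x^*$. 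The displacement has to happen \emph{between} the two rows. Your supporting lemma is also false as stated: two disjoint segments of a planar drawing can have bounding rectangles that overlap without being nested (e.g.\ $(0,0)$--$(1,10)$ and $(0.5,0)$--$(2,10)$), so ``disjoint or nested'' gives you no handle on the greedy column assignment. A further omission: you do not subdivide a segment where it crosses intermediate rows, so a single vertical piece at one column $x^*$ must represent a slanted segment that may pass to the right of some object in one intermediate row and to the left of another object in the next; no single column satisfies both constraints, so the left-to-right orders (and potentially planarity) are violated.

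The paper resolves both issues. It first subdivides every edge at each bend \emph{and} at each row crossing (pseudo-vertices, with non-integral $x$ allowed), so every non-horizontal piece joins adjacent rows and the left-to-right order is controlled row by row. Each (pseudo-)vertex becomes a box wide enough to give every up- and down-neighbour its own column. The connections between two adjacent rows are then made by VLSI channel routing, which uses \emph{extra} rows at non-integer $y$-coordinates for the horizontal jogs --- this is what your construction is missing. Finally, the resulting bends form only zig-zags, which are removed by the standard horizontal-shifting operation; this empties the auxiliary non-integer rows, so the height and the $y$-coordinates of the original rows are preserved. If you want to salvage your approach, you essentially need to reinvent these two ingredients: a channel (intermediate rows, later collapsed) in which to perform the horizontal displacement, and the subdivision at row crossings.
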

\begin{proof}
We first transform $\Gamma$ into an  {\em $h$-layer drawing}, i.e.,
a straight-line drawing where all edges are horizontal or connect
adjacent rows.  We do this by
inserting {\em pseudo-vertices} (i.e., subdivide edges) at bends and
whenever a segment of an edge crosses a row.  (We 
allow non-integral $x$-coordinates for pseudo-vertices.)

For each row $r$, let $w_1,\dots,w_k$ be the vertices (including
pseudo-vertices) in $r$ in left-to-right order.  In $\Gamma'$,
replace each $w_i$ by a box of width 
$\max\{1,\deg^{up}(w_i),\deg_{down}(w_i)\}$, where $\deg^{up}(w_i)$
and $\deg_{down}(w_i)$ are the number of neighbours of $w_i$ with
larger/smaller $y$-coordinate.  Place these boxes in row $r$ in
the same left-to-right order.

Each horizontal edge is drawn horizontally in $\Gamma'$ as well.
Each non-horizontal edge connects
two adjacent rows since we inserted pseudo-vertices.  
Connect the edges between two adjacent rows using VLSI channel routing
(see e.g.~\cite{Len90}), using two bends per edge and lots of new rows
(with non-integer $y$-coordinates) that contain
horizontal edge segments and nothing else.

\begin{figure}[ht]
\hspace*{\fill}
\input{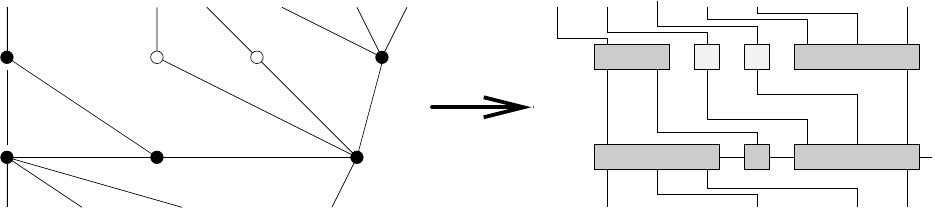_t}
\hspace*{\fill}
\caption{Converting a poly-line drawing to an orthogonal drawing.
Pseudo-vertices are white.}
\label{fig:PL2OD}
\end{figure}

Now bends only occur at zig-zags; remove these as in the proof of 
Theorem~\ref{thm:o2VR}.  This empties all rows except those
with integer coordinates and gives the desired height and
a flat visibility representation of the graph with pseudo-vertices.
Any pseudo-vertex can now be removed and replaced by a bend if
needed.
\end{proof}

Note that the visibility representation obtained as part of the
proof has width at most $\max\{n+p,m+p\}$, where $p$ is the number
of pseudo-vertices inserted.  An even better bound can be obtained
by observing that any pseudo-vertex that is not at a bend in $\Gamma$  
will receive two incident vertical segments and hence can be removed
in the visibility representation.
So the width is 
at most $\max\{n+b,m+b\}$, where $b$ is the number of bends
in $\Gamma$.

\section{Flat orthogonal drawings to poly-line drawings}
\label{se:VR_PL}

Combining the previous theorems, it is easy to see that
any flat orthogonal drawing can be converted to a poly-line
drawing of the same height:  First convert it to a visibility
representation, then convert it to a straight-line drawing,
and then interpret the result as a poly-line drawing.
However, since this involves Theorem~\ref{thm:VR_SL}, the
width might grow exponentially.  We now give a simple direct
proof of this transformation that shows that it can be done
while keeping the width small.

\begin{theorem}
Any flat orthogonal drawing $\Gamma$ can be transformed into a 
poly-line drawing $\Gamma'$ with the same $y$-coordinates
and the same left-to-right orders in each row.
Moreover, $\Gamma'$ has no more width than $\Gamma$,
and it is $y$-monotone if $\Gamma$ was.
\end{theorem}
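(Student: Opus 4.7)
The plan is to collapse each vertex box in $\Gamma$ to its leftmost point and then reroute each edge using at most one short slanted segment at each of its endpoints, chosen so that the new segment sits inside the narrow horizontal strip immediately above or below the former box. Concretely, for every vertex $v$ with box $[x_l(v),x_r(v)]$ in row $y(v)$, I would place $v$ at the single point $(x_l(v),y(v))$. For each edge $e$ incident to $v$ whose first segment is vertical, attaching at $(x,y(v))$ with $x>x_l(v)$ and going upward, I would delete the initial unit sub-segment $(x,y(v))\to(x,y(v)+1)$ and replace it by the slanted segment $(x_l(v),y(v))\to(x,y(v)+1)$; downward vertical attachments are handled symmetrically. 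If $e$'s first segment at $v$ is horizontal (which forces the attachment to be at $(x_r(v),y(v))$), I would simply extend that segment leftward so that it starts at $(x_l(v),y(v))$. The same modification is applied independently at the other endpoint $w$, with the one exception that if $e$ is a single vertical unit-segment between two adjacent rows the two endpoint modifications coalesce and $e$ is drawn in $\Gamma'$ as the single slanted segment $(x_l(v),y(v))\to(x_l(w),y(w))$.

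The crucial step is planarity, and the driving observation is that in $\Gamma$ the open strip $(x_l(v),x_r(v))\times(y(v),y(v)+1)$ directly above $v$'s box is empty except for the initial unit-portions of $v$'s top-attached vertical edges. Indeed, horizontal segments of an orthogonal drawing lie only on integer rows and cannot enter an open strip between two rows, while any vertical segment at $x\in(x_l(v),x_r(v))$ that crosses the strip would have to cross $v$'s box and so must attach to $v$. The symmetric statement holds below $v$. Hence every slanted replacement occupies only space freed by the removal of the segment it replaces, and since the slanted segments emanating from a single vertex $v$ fan out monotonically from $(x_l(v),y(v))$ to distinct points on the opposite boundary of the strip, they do not cross one another. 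For the short-edge merge case one observes that a single vertical unit-segment between $v$ and $w$ forces the $x$-ranges of the two boxes to overlap, which in turn rules out any through-going obstruction inside the rectangle swept out by the merged slanted segment; so this case reduces to the same strip-emptiness reasoning.

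Once planarity is established the remaining properties are immediate. The $y$-coordinates of all vertices and of every preserved bend are unchanged by construction; the left-to-right order in each row is inherited from the order of the $x_l(v)$ values, which was already consistent with the order of the boxes in $\Gamma$; no new $x$-coordinate is introduced, so the width does not grow; and $y$-monotonicity is preserved because each new slanted segment is strictly $y$-monotone in the same direction as the vertical segment it replaces, while the unchanged middle portion of each edge was already $y$-monotone in $\Gamma$. I expect the main technical hurdle to be the planarity verification, and specifically the merged slanted segments in the short-edge case; once the strip-emptiness observation is supplemented with the box-overlap argument the remaining case-checking is routine.
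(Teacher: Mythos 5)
Your construction is genuinely different from the paper's proof, which subdivides every bend and every row-crossing of a vertical segment into a pseudo-vertex (turning $\Gamma$ into a flat visibility representation of an augmented graph) and then reruns the algorithm of Theorem~\ref{thm:VR_SL}, observing that all predecessors now lie in the same or adjacent rows, so each vertex lands in the leftmost free position of its row and the width cannot grow. Your local-perturbation route can be made to work and yields the same width bound, but as written the planarity argument has a real hole. Your two supporting facts are (a) in $\Gamma$ the open strip above $v$'s box contains nothing but the initial portions of $v$'s own top-attached vertical segments, and (b) the slants emanating from a single vertex pairwise meet only at that vertex. Neither covers the interaction between slants of \emph{different} vertices: if $v$ lies in row $y$ and $u$ in row $y+1$ and their boxes overlap in $x$-range (possible whether or not $uv$ is an edge), then the strip above $v$ and the strip below $u$ overlap, and after the transformation that overlap region contains up-slants of $v$ and down-slants of $u$ simultaneously. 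The assertion that ``every slanted replacement occupies only space freed by the removal of the segment it replaces'' is literally false --- the slant from $(x_l(v),y)$ to $(x_1,y+1)$ sweeps over the whole interval $[x_l(v),x_1]$, not just the column $x=x_1$ --- so it cannot be invoked to dismiss these pairs, and fact (a), which is a statement about the \emph{original} drawing, says nothing about newly inserted segments belonging to another vertex.

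The missing step is fillable. For a non-adjacent pair $u,v$ as above, any top attachment $x_1$ of $v$ must satisfy $x_1\notin[x_l(u),x_r(u)]$ (otherwise the original vertical segment at $x_1$ would meet $u$'s box), and symmetrically $x_2\notin[x_l(v),x_r(v)]$ for every bottom attachment $x_2$ of $u$. Viewing each slant as a chord of the strip with one endpoint on each bounding row, two such chords cross if and only if $(x_l(v)-x_2)(x_1-x_l(u))<0$, and a short case analysis using the two exclusions above (together with the overlap of the box intervals) rules this out; the merged slant of an adjacent pair must likewise be checked against the slants of third vertices in those two rows, and unchanged vertical segments that pass through the strip without attaching in either row are excluded because their $x$-coordinate lies outside both boxes. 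You should also fix the minor slip that a horizontal attachment may occur at the left end $(x_l(v),y(v))$ of the box, where no modification is needed. With these additions your argument is complete.
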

\begin{proof}
First subdivide edges at all bends and all vertical edge-segments 
with pseudo-vertices so that
any vertical edge connects two vertices in adjacent layers.  
Apply the algorithm in Theorem~\ref{thm:VR2SL} to find a straight-line drawing 
of the resulting graph; 
removing the pseudo-vertices then gives the desired poly-line drawing.
All properties are easily verified, except for the width.  Observe
that when applying the construction of Theorem~\ref{thm:VR2SL},
for any vertex $v_i$ the predecessors are in the same or in adjacent
rows.  Hence all lines from $v_i$ to predecessors are unobstructed,
and $v_i$ can simply be placed in the leftmost free position of its row.
Hence the width of $\Gamma'$ is the maximal number of vertices or 
pseudo-vertices in a row, which is no more than the width
of $\Gamma$ since $\Gamma$ is orthogonal.
\end{proof}

\section{Applications}

We give a few applications of the results in this paper.

\subsection{Drawing graphs with small height}

The {\em pathwidth} $pw(G)$ of a graph $G$ is a graph
parameter that is related to heights of planar graph drawing:
any planar graph that has a straight-line drawing of height
$h$ has pathwidth at least $h$ \cite{FLW03}.  But not all graphs
with pathwidth $h$ have a drawing of height $O(h)$.  Our transformations
show that such heights do exist for outer-planar graphs:

\begin{corollary}
Any outer-planar graph $G$ has a straight-line drawing of height
$O(pw(G))$.
\end{corollary}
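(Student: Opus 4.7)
The plan is a two-step reduction. First, I would produce for $G$ a flat visibility representation of height $O(pw(G))$; then Theorem~\ref{thm:VR_SL} immediately converts it into a straight-line drawing with the same $y$-coordinates, and hence of the same height.

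To build the initial flat visibility representation, fix a path decomposition $(B_1,\dots,B_k)$ of $G$ of minimum width $pw(G)$, so that $|B_i|\le pw(G)+1$ for every $i$. Each vertex $v$ is active in a contiguous range $[\ell(v),r(v)]$ of bag indices. Since the vertices active at any given index form a bag of size at most $pw(G)+1$, a first-fit row assignment along the index axis places every vertex in one of $pw(G)+1$ rows so that two vertices whose active ranges overlap lie on different rows. Draw each $v$ as a horizontal box in its assigned row spanning the $x$-interval $[\ell(v),r(v)]$; for each edge $uv$, pick a bag $B_i$ containing both endpoints and draw $uv$ as a vertical visibility segment in column $i$.

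The subtle point is to show that the vertical edge segments can be chosen pairwise non-crossing. Here outer-planarity enters: the weak dual of $G$ is a tree, and one may choose the path decomposition to be consistent with a DFS traversal of this tree, so that the cyclic edge order around each vertex (induced by the outer-planar embedding) matches the $x$-order in which its neighbours appear in the decomposition. With this alignment, the edges contributed by a single bag can always be routed in its column as pairwise non-crossing vertical segments, giving a flat visibility representation of height at most $pw(G)+1$. Feeding this into Theorem~\ref{thm:VR_SL} then yields the desired straight-line drawing. I expect the main obstacle to be precisely this planarity argument for the routing; an alternative route is to invoke an existing construction of a flat visibility representation, or a flat $y$-monotone orthogonal drawing combined with Theorem~\ref{thm:OD_VR}, of outer-planar graphs with height $O(pw(G))$, and then close with Theorem~\ref{thm:VR_SL}.
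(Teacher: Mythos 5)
Your overall reduction is exactly the paper's: produce a flat visibility representation of height $O(pw(G))$ and finish with Theorem~\ref{thm:VR_SL}. The gap is that the first step --- which is where all the work lies --- is not established by your construction. In your layout, an edge $uv$ assigned to column $i$ must be drawn as a vertical segment between the rows of $u$ and $v$, and this segment is obstructed not only by other \emph{edges} but by the \emph{boxes} of every other vertex $w\in B_i$ whose row lies strictly between those of $u$ and $v$ (every such $w$ spans column $i$ by construction). The first-fit interval colouring gives you no control over which vertices land in intermediate rows: in a bag of size $pw(G)+1$ an edge may join the vertices placed in the top and bottom rows while other bag members sit in between. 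Outerplanar graphs can have pathwidth $\Theta(\log n)$ (e.g.\ complete binary trees), so such large bags genuinely occur, and your sketched DFS-of-the-weak-dual argument addresses only the cyclic order of edges around a vertex, not the row placement of blocking vertices. (Separately, all edges charged to the same bag would share the $x$-coordinate $i$ and overlap, though that is easily repaired by widening columns.)

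Your fallback --- ``invoke an existing construction'' --- is what the paper actually does, but it hides a second nontrivial ingredient you do not mention: the known height-$4\,pw$ flat visibility construction of \cite{Bie-WAOA12} requires the graph to be maximal (hence 2-connected) outerplanar, and naively augmenting an arbitrary outerplanar graph to a maximal one can increase the pathwidth. The paper therefore first invokes Babu et al.~\cite{BBC+12} to add edges while keeping the pathwidth in $O(pw(G))$, and only then applies the visibility construction followed by Theorem~\ref{thm:VR_SL}. Without either a correct direct construction of the flat visibility representation or these two cited results, your proof is incomplete.
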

\begin{proof}
By a result of Babu et al.~\cite{BBC+12}, we can add edges to $G$
to obtain a maximal outerplanar graph $G'$ with pathwidth in $O(pw(G))$.
In particular, $G'$ is 2-connected and hence by 
\cite{Bie-WAOA12} it has a flat visibility representation of
height at most $4pw(G') \in O(pw(G))$.  By Theorem~\ref{thm:VR_SL}, therefore
$G'$ (and with it $G$) has a straight-line drawing of height $O(pw(G))$.
\end{proof}

Recall that outerplanar graphs have constant treewidth and hence
pathwidth $O(\log n)$, so any outerplanar graph has a straight-line
drawing of height $O(\log n)$.  

In a similar fashion, any graph drawing algorithm that produces
drawings of small height in one of our models produces, with our
transformations, graph drawings of small heights in all other models.
We give one more example:

\begin{corollary}
Any 4-connected planar graph has a visibility representation
of height at most $\lfloor n/2 \rfloor$.
\end{corollary}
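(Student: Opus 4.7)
The plan is to chain a known straight-line drawing result for 4-connected planar graphs with the height-preserving transformations established earlier in this paper. By a classical result of Miura, Nakano, and Nishizeki, every 4-connected plane graph with at least four vertices on the outer face admits a straight-line grid drawing whose height is at most $\lfloor n/2 \rfloor$ (and width $\lceil n/2 \rceil - 1$). I would begin by invoking this theorem to obtain a straight-line drawing $\Gamma_0$ of $G$ with height at most $\lfloor n/2 \rfloor$.

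Next I would interpret $\Gamma_0$ as a poly-line drawing. Since straight-line drawings are automatically $y$-monotone, Theorem~\ref{thm:PL_OD} applies and yields a $y$-monotone flat orthogonal drawing $\Gamma_1$ of $G$ with exactly the same $y$-coordinates as $\Gamma_0$, hence of height at most $\lfloor n/2 \rfloor$. Finally, Theorem~\ref{thm:OD_VR} converts $\Gamma_1$ into a flat visibility representation $\Gamma'$ of $G$ whose $y$-coordinates (and row-by-row orderings) agree with those of $\Gamma_1$; thus $\Gamma'$ also has height at most $\lfloor n/2 \rfloor$, which is what we want.

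The only genuinely non-trivial step is ensuring that the Miura--Nakano--Nishizeki hypothesis is met, i.e., that we can embed $G$ with at least four vertices on the outer face. If $G$ has any face of length at least $4$, pick it as the outer face. Otherwise $G$ is a 4-connected planar triangulation, and the outer face has only three vertices; here one applies the standard trick of splitting an outer edge with a temporary dummy vertex (preserving 4-connectivity of the relevant substructure), drawing the augmented graph, and then contracting the dummy back, which only affects one row. This case-handling and the verification that the height bound $\lfloor n/2 \rfloor$ survives the dummy-vertex manipulation are the main (and only) obstacle; the two transformation steps are essentially mechanical given the theorems already proved in this paper.
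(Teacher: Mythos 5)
Your proposal is correct and follows essentially the same route as the paper: invoke the Miura--Nakano--Nishizeki drawing and then chain Theorem~\ref{thm:PL_OD} and Theorem~\ref{thm:OD_VR}, both of which preserve $y$-coordinates and hence height. The only differences are cosmetic: the paper quotes the MNN result in the form ``width plus height at most $n$'' and obtains height $\lfloor n/2\rfloor$ by rotating if necessary (which is the safer citation, since the stated MNN height bound alone is $\lfloor n/2\rfloor+1$), and it does not discuss the hypothesis of four vertices on the outer face that you carefully (and reasonably) flag.
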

\begin{proof}
It is known that any 4-connected planar graph has
a straight-line drawing 
where the sum of the width and height is at most $n$ \cite{MNN06}.
Therefore, after possible rotation, the height is at most
$\lfloor n/2 \rfloor$,
and with 
Theorem~\ref{thm:OD_VR} and \ref{thm:PL_OD} 
we get a flat visibility representation of height $\lfloor n/2 \rfloor$.
\end{proof}

The best previous bound on the height of visibility representations
of 4-connected planar graphs was $\lceil \frac{3n}{4} \rceil$
\cite{HWZ12}.

\subsection{Integer programming formulations}

In a recent paper, we developed integer program (IP) formulations
for many graph drawing problems where vertices and edges are
represented by axis-aligned boxes \cite{Bie-GD13}.  By adding
some constraints, one can force that edges degenerate to line segments
and vertices to horizontal line segments.  In particular, it is easy to 
create an IP that expresses ``$G$ is drawn as a flat visibility
representation'', using $O(n^3)$ variables and constraints.
With the transformations given in this paper, we can then use IPs
for many other graph drawing problems. 
%
%
The following result (based on Theorem~\ref{thm:VR_SL}, 
\ref{thm:OD_VR}, 
\ref{thm:PL_OD}) is crucial: 

\begin{corollary}
A planar graph $G$ has a planar straight-line drawing of height $h$
if and only if it has a flat visibility representation of height $h$.
\end{corollary}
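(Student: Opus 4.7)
The plan is to prove both directions by chaining together the transformation theorems already established, noting that all of them preserve $y$-coordinates and therefore preserve height.

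For the ``if'' direction (visibility representation $\Rightarrow$ straight-line drawing), I would simply invoke Theorem~\ref{thm:VR_SL}: it converts a flat visibility representation of height $h$ into a straight-line drawing with the same $y$-coordinates, so the resulting drawing has height $h$.

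For the ``only if'' direction (straight-line drawing $\Rightarrow$ visibility representation), I would observe that a straight-line drawing is a special case of a poly-line drawing (with zero bends), and it is automatically $y$-monotone. First I would apply Theorem~\ref{thm:PL_OD} to obtain a flat orthogonal drawing with the same $y$-coordinates; since the input was $y$-monotone, the theorem guarantees that the output is also $y$-monotone. Then I would apply Theorem~\ref{thm:OD_VR} to this $y$-monotone flat orthogonal drawing, obtaining a flat visibility representation that again has the same $y$-coordinates. Composing the two transformations, the final visibility representation has the same height $h$ as the original straight-line drawing.

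There is no real obstacle here beyond verifying that the hypotheses required to chain the theorems are met, namely that a straight-line drawing qualifies as a $y$-monotone poly-line drawing (which is already noted in the introduction) and that the ``$y$-monotone'' property is preserved by Theorem~\ref{thm:PL_OD} so that Theorem~\ref{thm:OD_VR} can be applied. Since each individual transformation preserves $y$-coordinates, the height is preserved throughout, completing the proof.
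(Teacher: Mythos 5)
Your proposal is correct and matches the paper's intended argument exactly: the corollary is stated as following from Theorems~\ref{thm:VR_SL}, \ref{thm:OD_VR}, and~\ref{thm:PL_OD}, which is precisely the chain you construct (straight-line as a $y$-monotone poly-line drawing, then Theorem~\ref{thm:PL_OD}, then Theorem~\ref{thm:OD_VR} for one direction, and Theorem~\ref{thm:VR_SL} for the other). You also correctly identify the one hypothesis worth checking, namely that $y$-monotonicity is preserved so that Theorem~\ref{thm:OD_VR} applies.
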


It is very easy to encode the height in the IP formulations
of \cite{Bie-GD13}. 
We therefore have:

\begin{corollary}
There exists an integer program with $O(n^3)$ variables and constraints
to find the minimum height of a planar straight-line drawing of a graph $G$.
\end{corollary}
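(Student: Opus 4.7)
The plan is to combine the preceding corollary with the integer programming machinery of \cite{Bie-GD13} in an essentially plug-and-play fashion. By the preceding corollary, the minimum height of a planar straight-line drawing of $G$ equals the minimum height of a flat visibility representation of $G$. Hence it suffices to describe an IP of the claimed size whose optimum value is the minimum height of a flat visibility representation of $G$.

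First I would invoke the formulation of \cite{Bie-GD13}, which expresses ``$G$ admits a flat visibility representation'' as a feasibility IP with $O(n^3)$ variables and constraints (the cubic count comes from the pairwise and triple comparisons of vertex $x$-ranges and $y$-coordinates needed to enforce planarity and the visibility property). I would then recall which variables in that IP play the role of the integer $y$-coordinate $y(v)$ of each vertex $v$; these are already present in the encoding of \cite{Bie-GD13} and take values in a bounded range of size $O(n)$.

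Next I would introduce a single new integer variable $H$ representing the height, together with $n$ inequalities $1 \le y(v) \le H$, one per vertex. This adds only $O(n)$ variables and constraints, so the total is still $O(n^3)$. Setting the objective to $\min H$ and invoking the previous corollary yields an IP whose optimum is the minimum height of a planar straight-line drawing of $G$, as required.

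The main potential obstacle is a notational one rather than a mathematical one: the formulation in \cite{Bie-GD13} is stated for boxes rather than for flat representations, so I would need to cite (or re-derive) the observation already used earlier in the paper that adding a constant number of constraints per vertex forces every box to degenerate into a horizontal segment of unit height, matching the ``flat'' requirement. Once that reduction is in place, no further combinatorial argument is needed, and the bound on the IP size follows directly from the size of the base formulation.
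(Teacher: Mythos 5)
Your proposal matches the paper's (implicit) argument exactly: the paper derives this corollary by combining the preceding corollary (straight-line height $h$ iff flat visibility representation of height $h$) with the $O(n^3)$-size IP of \cite{Bie-GD13} for flat visibility representations, noting only that ``it is very easy to encode the height'' --- precisely the extra variable $H$ and constraints $y(v)\le H$ you describe. Your added remarks about forcing boxes to degenerate and about the objective function are faithful elaborations of the same route, not a different one.
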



A directed acyclic graph has an
{\em upward drawing} if it has a planar straight-line drawing such
that for any directed edge $v\rightarrow w$ the $y$-coordinate of $v$
is smaller than the $y$-coordinate of $w$.  Testing whether a graph
has an upward drawing is NP-hard \cite{GT01}.  There exists a way
to formulate ``$G$ has an upward drawing'' as either IP
or as a Satisfiability-problem, using partial orders on
the edges and vertices \cite{CZ-GD12}.  Our transformations give
a different way of testing this via IP:

\begin{lemma}
A directed acyclic graph has an upward planar drawing if and only
if it has a visibility representation where all edges are vertical
lines, with the head above the tail.
\end{lemma}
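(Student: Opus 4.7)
The plan is to prove both directions by chaining together the height-preserving transformations already established in the paper, exploiting the fact that all of them preserve $y$-coordinates.

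For the easier direction (visibility representation with vertical edges $\Rightarrow$ upward drawing), I would apply Theorem~\ref{thm:VR_SL} to the given flat visibility representation $\Gamma$. This produces a planar straight-line drawing $\Gamma'$ in which every vertex keeps its original $y$-coordinate. For any directed edge $v\to w$, the hypothesis says it was drawn vertically in $\Gamma$ with $w$ above $v$, so $y(v)<y(w)$ in $\Gamma$, and therefore $y(v)<y(w)$ in $\Gamma'$ as well. Since the edge is drawn as a straight segment between these two points, it is $y$-monotone and goes strictly upward, so $\Gamma'$ is an upward planar drawing.

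For the other direction (upward drawing $\Rightarrow$ visibility representation with vertical edges), I would start from an upward planar drawing $\Gamma$, regarded as a $y$-monotone poly-line drawing. First apply Theorem~\ref{thm:PL_OD} to obtain a $y$-monotone flat orthogonal drawing, then apply Theorem~\ref{thm:OD_VR} to convert that into a flat visibility representation $\Gamma'$. Both transformations preserve $y$-coordinates. Now for any directed edge $v\to w$ we have $y(v)<y(w)$ in $\Gamma$, hence also in $\Gamma'$; in particular $v$ and $w$ lie in different rows of $\Gamma'$. In a visibility representation an edge is a single horizontal or vertical segment, and a horizontal segment can only connect vertices in the same row, so the edge is forced to be vertical, with the head $w$ above the tail $v$ as required.

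There is essentially no obstacle: the main point is just to observe that (i) in an upward drawing no edge can be horizontal, so after the transformations every edge must land in the ``vertical'' case of a visibility representation, and (ii) the transformations of Theorems~\ref{thm:VR_SL}, \ref{thm:OD_VR}, and \ref{thm:PL_OD} preserve $y$-coordinates, which is exactly what makes the upward property survive. No additional constructions are needed.
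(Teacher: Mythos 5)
Your proof follows essentially the same route as the paper's: chain Theorems~\ref{thm:PL_OD} and \ref{thm:OD_VR} in one direction and Theorem~\ref{thm:VR_SL} in the other, and let the preservation of $y$-coordinates carry the upward property through. Both directions are argued correctly as far as they go.

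There is one step you glossed over. The lemma speaks of \emph{a visibility representation} with all edges vertical, not a \emph{flat} one, yet you write ``apply Theorem~\ref{thm:VR_SL} to the given flat visibility representation $\Gamma$'' --- Theorem~\ref{thm:VR_SL} only applies to flat visibility representations, and the given one may have vertex boxes of positive height. The paper inserts an explicit intermediate step here: replace every box of positive height by a horizontal segment, observing that this causes no conflict precisely because all edges are vertical (a horizontal edge could be stranded by such a collapse, but none exist). This is a one-sentence fix, and your own observation (i) --- that no edge can be horizontal --- is exactly the ingredient needed to justify it, so the omission is minor; but as written your argument in the ``visibility representation $\Rightarrow$ upward drawing'' direction does not apply to the full generality of the statement.
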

\begin{proof}
Given a straight-line upward drawing, we can transform it into the
desired visibility representation using Theorems~\ref{thm:OD_VR}
and \ref{thm:PL_OD}.
Since $y$-coordinates are unchanged, any edge is necessarily drawn
vertical with the head above the tail.  Vice versa, given such a visibility
representation, we can transform it into a flat visibility representation
simply by replacing boxes of positive height by horizontal segments; this
leads to no conflict since there are no horizontal edges.  Then apply
Theorem~\ref{thm:VR_SL}; this gives an upward drawing since $y$-coordinates
are unchanged.
\end{proof}

It is easy to express ``edge $v\rightarrow w$ must be drawn vertically,
with the head above the tail'' as constraints in the IP 
for visibility representations defined in \cite{Bie-GD13}.  We therefore
have:

\begin{corollary}
There exists an integer program with $O(n^3)$ variables and
constraints to test whether a planar graph has an upward planar drawing.
Moreoever, the same integer program also finds the minimum-height
upward drawing.
\end{corollary}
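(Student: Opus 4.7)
The plan is to take the integer program of \cite{Bie-GD13} that encodes ``$G$ has a flat visibility representation'' (which uses $O(n^3)$ variables and constraints) and augment it with a constant number of additional constraints per edge so that the feasible solutions correspond exactly to the upward planar visibility representations described in the preceding lemma. Since the graph is planar we have $m\in O(n)$, so the total number of variables and constraints remains $O(n^3)$.

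Concretely, I would first invoke the preceding lemma: $G$ has an upward planar drawing if and only if it has a visibility representation in which every edge $v\to w$ is drawn as a vertical segment with the head above the tail. So it suffices to show how to enforce these two extra conditions in the IP of \cite{Bie-GD13}. For each directed edge $v\to w$, ``the edge is drawn vertically'' is expressible by equating the $x$-coordinate variables of the two attachment points of this edge on $v$ and $w$ (or by equating the appropriate edge-endpoint coordinates that the IP already provides). ``Head above tail'' is encoded by the single inequality $Y(v) < Y(w)$ on the $y$-coordinate variables, i.e.\ $Y(w)\geq Y(v)+1$. These add $O(m)=O(n)$ constraints, not disturbing the $O(n^3)$ bound.

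For the second part of the statement, the IP in \cite{Bie-GD13} already carries a variable for the height of the drawing (or one can introduce one with constraints $h\geq Y(v)$ for all $v$ and $h\geq 1$), so we simply minimize $h$. Because the transformations of Theorem~\ref{thm:VR_SL}, Theorem~\ref{thm:OD_VR}, and Theorem~\ref{thm:PL_OD} used in the lemma all preserve the $y$-coordinates of vertices, the height of a feasible upward visibility representation matches the height of the corresponding straight-line upward drawing; minimizing $h$ therefore yields a minimum-height upward planar drawing.

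The only subtle step is checking that the constraints I add are indeed expressible in the language of the IP from \cite{Bie-GD13}, i.e.\ that the variables used to place edge endpoints and box boundaries are present and can be equated and compared. This is straightforward from the formulation there (which represents every vertex as an axis-aligned box and every edge as a box with an $x$-range and a $y$-range), so the rest of the argument is a routine verification and no real obstacle arises beyond citing the lemma and the prior IP.
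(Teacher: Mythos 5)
Your proposal is correct and matches the paper's (largely implicit) argument: the paper likewise combines the preceding lemma with the flat-visibility-representation IP of \cite{Bie-GD13}, adds per-edge constraints forcing each edge to be vertical with head above tail, and minimizes an easily encoded height objective. No substantive difference in approach.
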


\section{Conclusion and open problems}
\label{se:open}

In this paper, we studied transformations between different types
of graph drawings, in particular between straight-line drawings
and flat visibility representations.  We demonstrated applications
of these results, especially for drawings of small heights, and
upward drawings.

We have not been able to create transformations that start with an
arbitrary (i.e., not necessarily flat) visibility representation and
turn it into a straight-line drawing of approximately the same height.
Does such a transformation exist?

Another open problem concerns the width, especially for the transformation
from flat visibility representations to planar straight-line drawings.
Is it possible to make the width polynomial if we may change the
$y$-coordinates while keeping the height asymptotically the same?

\section*{Acknowledgments}

Research partially supported by NSERC and by the Ross and Muriel
Cheriton Fellowship.  Some of the results in this paper appeared
in \cite{Bie-WAOA12}.

\bibliographystyle{plain}
\bibliography{../../bib/full,../../bib/papers,../../bib/gd}

\begin{thebibliography}{10}

\bibitem{BBC+12}
J.~Babu, M.~Basavaraju, L.S. Chandran, and D.~Rajendraprasad.
\newblock 2-connecting outerplanar graphs without blowing up the pathwidth.
\newblock In {\em Computing and Combinatorics (COCOON'13)}, volume 7936 of {\em
  Lecture Notes in Computer Science}, pages 626--637, 2013.

\bibitem{Bie-DCG11}
T.~Biedl.
\newblock Small drawings of outerplanar graphs, series-parallel graphs, and
  other planar graphs.
\newblock {\em Discrete and Computational Geometry}, 45(1):141--160, 2011.

\bibitem{Bie-WAOA12}
T.~Biedl.
\newblock A 4-approximation algorithm for the height of drawing 2-connected
  outerplanar graph.
\newblock In {\em Workshop on Approximation and Online Algorithms (WAOA'12)},
  volume 7846 of {\em Lecture Notes in Computer Science}, pages 272--285, 2013.

\bibitem{Bie-GD13}
T.~Biedl, T.~Bl{\"{a}}sius, B.~Niedermann, M.~N{\"{o}}llenburg, R.~Prutkin, and
  I.~Rutter.
\newblock {Using ILP/SAT to determine pathwidth, visbility representations, and
  other grid-based graph drawings}, 2013.
\newblock To appear at GD'13.

\bibitem{BLPS09}
T.~Biedl, A.~Lubiw, M.~Petrick, and M.~Spriggs.
\newblock Morphing orthogonal planar graph drawings, 2013.
\newblock Accepted to {\em Transaction of Algorithms} with 23 pages.

\bibitem{CZ-GD12}
Markus Chimani and Robert Zeranski.
\newblock {Upward Planarity Testing via SAT}.
\newblock In {\em Graph Drawing '12}, volume 7704, pages 248--259, 2013.

\bibitem{DBETT98}
G.~Di~Battista, P.~Eades, R.~Tamassia, and I.~Tollis.
\newblock {\em Graph Drawing: Algorithms for Geometric Representations of
  Graphs}.
\newblock Prentice-Hall, 1998.

\bibitem{Fary48}
I.~F{\'a}ry.
\newblock On straight line representation of planar graphs.
\newblock {\em Acta Scientiarum Mathematicarum (Szeged)}, 11(4):229--233, 1948.

\bibitem{FLW03}
S.~Felsner, G.~Liotta, and S.~Wismath.
\newblock Straight-line drawings on restricted integer grids in two and three
  dimensions.
\newblock {\em Journal of Graph Algorithms and Applications}, 7(4):335--362,
  2003.

\bibitem{FPP90}
H.~de Fraysseix, J.~Pach, and R.~Pollack.
\newblock How to draw a planar graph on a grid.
\newblock {\em Combinatorica}, 10:41--51, 1990.

\bibitem{GT01}
Ashim Garg and Roberto Tamassia.
\newblock On the computational complexity of upward and rectilinear planarity
  testing.
\newblock {\em SIAM J. Comput.}, 31(2):601--625, 2001.

\bibitem{HWZ12}
Xin He, Jiun-Jie Wang, and Huaming Zhang.
\newblock Compact visibility representation of 4-connected plane graphs.
\newblock {\em Theor. Comput. Sci.}, 447:62--73, 2012.

\bibitem{Len90}
T.~Lengauer.
\newblock {\em Combinatorial Algorithms for Integrated Circuit Layout}.
\newblock Teubner/Wiley \& Sons, Stuttgart/Chicester, 1990.

\bibitem{MNN06}
Kazuyuki Miura, Shin-Ichi Nakano, and Takao Nishizeki.
\newblock Convex grid drawings of four-connected plane graphs.
\newblock {\em Int. J. Found. Comput. Sci.}, 17(5):1031--1060, 2006.

\bibitem{NR04}
T.~Nishizeki and M.S. Rahman.
\newblock {\em Planar Graph Drawing}, volume~12 of {\em Lecture Notes Series on
  Computing}.
\newblock World Scientific, 2004.

\bibitem{Sch90}
W.~Schnyder.
\newblock Embedding planar graphs on the grid.
\newblock In {\em {ACM}-{SIAM} Symposium on Discrete Algorithms (SODA '90)},
  pages 138--148, 1990.

\bibitem{Stein51}
S.~Stein.
\newblock Convex maps.
\newblock In {\em American Mathematical Society}, volume~2, pages 464--466,
  1951.

\bibitem{Wagner36}
K.~Wagner.
\newblock Bemerkungen zum {V}ierfarbenproblem.
\newblock {\em Jahresbericht der Deutschen Mathematiker-Vereinigung},
  46:26--32, 1936.

\end{thebibliography}

\end{document}